\def\8{{\times}}
\def\hm{{\hat\m}}
\definecolor{Hey}{rgb}{.9,.05,.4}
\definecolor{plum}{rgb}{.4,0,.6}
\definecolor{Tan}{rgb}{.87,.62,0.42}
\definecolor{Green}{rgb}{0.25,.65,.25}
\definecolor{Blue}{rgb}{0.25,.25,.65}
\definecolor{Plum}{rgb}{.47,0.15,.7}
\definecolor{Red}{rgb}{.75,0.15,0.25}
\definecolor{Cyan}{rgb}{0.4,0.6,.95}
\definecolor{yel}{rgb}{.9,.8,0}
\definecolor{gray}{rgb}{.5,.5,.5}
\definecolor{tan}{rgb}{.87,.62,0.42}
\def\sS{\mathscr{S}}
\def\4#1#2{G_#1^{\sss(#2)}}
\def\I#1{\intertext{#1\vspace{-4mm}}}
\def\stk#1{{\scriptsize\shortstack{#1}}}
\def\Stk#1{\vC{\stk{#1}}}
\def\1{\vphantom{I^1_1}}
\def\ba{\mathbf{a}}
\def\vC#1{\vcenter{\hbox{\hss#1\hss}}}
\def\Ft#1{\,\footnote{#1}}
\newdimen\parshift\parshift=\parindent
 \long\def\@footnotetext#1{\insert\footins{\reset@font\footnotesize\interlinepenalty%
  \interfootnotelinepenalty\splittopskip\footnotesep\splitmaxdepth\dp\strutbox%
   \floatingpenalty\@MM\hsize\columnwidth\addtolength{\hsize}{-2\parindent}
    \@parboxrestore\protected@edef\@currentlabel{\csname p@footnote\endcsname\@thefnmark}
      \color@begingroup
       \@makefntext{\rule\z@\footnotesep\ignorespaces#1\@finalstrut\strutbox}
        \color@endgroup}}
 \long\def\@makefntext#1{\hglue\parshift
                         \vbox{\noindent\hb@xt@0em{\hss\@makefnmark}#1}}
 \font\rOpe=cmsy10                        
 \def\ktl{{\hbox{\rOpe\char'170}}}        
 \def\kbl{{\hbox{\rOpe\char'170}}}        
 \def\kcr{{\reflectbox{\rOpe\char'170}}}        
 \def\ktr{{\reflectbox{\rOpe\char'170}}}        
 \def\kbr{{\reflectbox{\rOpe\char'170}}}        
 \def\Border{\vbox{\hsize0pt
        \setlength{\unitlength}{1mm}
        \newcount\xco
        \newcount\yco
        \xco=-21
        \yco=12
        \begin{picture}(0,0)(-7.5,0)
        \put(\xco,\yco){$\ktl$}
        \advance\yco by-1
        {\loop
        \put(\xco,\yco){$\kcr$}
        \advance\yco by-2
        \ifnum\yco>-240
        \repeat
        \put(\xco,\yco){$\kbl$}}
        \xco=170
        \yco=12
        \put(\xco,\yco){$\ktr$}
        \advance\yco by-1
        {\loop
        \put(\xco,\yco){$\kcr$}
        \advance\yco by-2
        \ifnum\yco>-240
        \repeat
        \put(\xco,\yco){$\kbr$}}
        \put(-19.5,13){\scalebox{.598}{State University of New York
            Physics Department|University of Maryland Center for
            String and Particle  Theory \&\ Physics Department|%
            Delaware State University DAMTP}}
        \put(-19.5,-241.5){\scalebox{.728}{University of Washington
            Mathematics Department|Pepperdine University Natural
            Sciences Division|Bard College Mathematics
            Department}}
        \end{picture}
        \par\vskip-8mm}}
\definecolor{UMred}{rgb}{.9,.05,.2}
 \def\UMbanner{\vbox{\hsize0pt
        \setlength{\unitlength}{.4mm}
        \thicklines
        \begin{picture}(0,0)(-30,-10)
        \put(165,16){\line(1,0){4}}
        \put(170,16){\line(1,0){4}}
        \put(180,16){\line(1,0){4}}
        \put(175,0){\line(1,0){4}}
        \put(180,0){\line(1,0){4}}
        \put(185,0){\line(1,0){4}}
        \put(169,0){\line(0,1){16}}
        \put(170,0){\line(0,1){16}}
        \put(179,0){\line(0,1){16}}
        \put(180,0){\line(0,1){16}}
        \put(184,0){\line(0,1){16}}
        \put(185,0){\line(0,1){16}}
        \put(169,16){\oval(8,32)[bl]}
        \put(170,16){\oval(8,32)[br]}
        \put(179,0){\oval(8,32)[tl]}
        \put(185,0){\oval(8,32)[tr]}
        \end{picture}
        \par\vskip-6.5mm
        \thicklines}}
\begin{document}
\thispagestyle{empty}
\vbox{\Border\UMbanner}
 \noindent
 \today\hfill{
              UMDEPP 07-012/SUNY-O/663 
 }
 \begin{center}
{\LARGE\sf\bfseries\boldmath
 Relating Doubly-Even Error-Correcting Codes,\\[-1mm]
  Graphs, and Irreducible Representations of\\[1mm]
   $N$-Extended Supersymmetry\Ft{To appear
   in {\em\/Discrete and Computational Mathematics\/}, Eds: F.~Liu, {\em\/et.~al.\/}, (Nova Science Publishers, Inc.,2008).}}\\*[5mm]
{\sf\bfseries C.F.\,Doran$^a$, M.G.\,Faux$^b$, S.J.\,Gates, Jr.$^c$,
     T.\,H\"{u}bsch$^d$, K.M.\,Iga$^e$ and G.D.\,Landweber$^f$}\\*[2mm]
{\small\it
  \leavevmode\hbox to\hsize{\hss
  $^a$Department of Mathematics,
      University of Washington, Seattle, WA 98105%
  , {\tt  doran@math.washington.edu}\hss}
  \\
  $^b$Department of Physics,
      State University of New York, Oneonta, NY 13825%
  , {\tt  fauxmg@oneonta.edu}
  \\
  $^c$Center for String and Particle Theory,\\[-1mm]
      Department of Physics, University of Maryland, College Park, MD 20472%
  , {\tt  gatess@wam.umd.edu}
  \\
  $^d$Department of Physics \&\ Astronomy,
      Howard University, Washington, DC 20059%
  , {\tt  thubsch@howard.edu}\\[-1mm]
  \leavevmode\hbox to\hsize{\hss
  Department of Applied Mathematics and Theoretical Physics,
      Delaware State University, Dover, DE 19901\hss}
  \\
  $^e$Natural Science Division,
      Pepperdine University, Malibu, CA 90263%
  , {\tt  Kevin.Iga@pepperdine.edu}
  \\
 $^f$Department of Mathematics, Bard College,
     Annandale-on-Hudson, NY 12504-5000%
  , {\tt  gregland@bard.edu}
 }\\[3mm]
{\sf\bfseries ABSTRACT}\\[3mm]
\parbox{145mm}{
Previous work\cite{r6-1} has shown that the classification of indecomposable off-shell representations of $N$-supersymmetry, depicted as {\em Adinkras\/}, may be factored into specifying the {\em topologies\/} available to Adinkras, and then the height-assignments for each topological type. The latter problem being solved by a recursive mechanism that generates all height-assignments within a topology\cite{r6-1}, it remains to classify the former. Herein we show that this problem is equivalent to classifying certain ({\bf1})~graphs and ({\bf2})~error-correcting codes.}
\end{center}
\noindent 
{\em\/Mathematics Subject Classification\/}. Primary: 81Q60; Secondary: 15A66, 16W70\\
{\em\/Keywords\/}: Supersymmetry, Error-Correcting Codes, Graphs

\section{The Statement of the Problem}
 \label{IRS}
 Supersymmetry algebras are a special case of super-algebras, where the odd generators, $Q$, form a spin-$\inv2$ representation of the Lorentz algebra contained in the even part and $\{Q,Q\}$ necessarily contains the translation generators in the even part. Systems exhibiting such symmetry have been studied over more than three decades and find many applications, although experimental evidence that Nature also employs supersymmetry is ironically lacking within high-energy particle physics, where it was originally invented. Nevertheless, as the only known systematic mechanism for stabilizing the vacuum\footnote{Vacuum here denotes the space of absolute ground states in these {\em quantum\/} theories.}, supersymmetry is also a keystone in most contemporary attempts at unifying all fundamental physics, such as string theory and its $M$- and $F$-theory extensions. There, one typically needs a large number, $N\leq32$, of supersymmetry generators, in which case {\em off-shell\/} descriptions, indispensable for a full understanding of the quantum theory, are sorely absent.

\subsection{Reduction to 1 Dimensions}
 This has motivated some of the recent interest%
\cite{rGR0,rGR1,rGR2,rKRT,rT06,rGLPR,rGLP} in the 1-dimensional dimensional reduction of supersymmetric field theories\footnote{We refer to $N$-extended supersymmetry in 1-dimensional time as ``$(1|N)$-supersymmetry''. Besides dimensional reduction of field theories in higher-dimensional spacetime (including here world-sheet theories for superstrings) to their 1-dimensional shadows, $(1|N)$-supersymmetry is also present in the study of supersymmetric wave functionals in {\em any supersymmetric quantum field theory\/}, and so applies to all of them also in this other, more fundamental way.}, whereupon the supersymmetry algebra is specified:
\begin{alignat}{3}
 \big\{\,Q_I\,,\,Q_J\,\big\}&=2\,\d_{IJ}\,H~, &\qquad
 \big[\,H\,,\,Q_I\,]&=0~,\qquad I,J=1,\cdots,N~,\label{e(1|N)}
\end{alignat}
where $H$ is the Hamiltonian, generating the 1-dimensional Poincar\'e group: time-translations.
 This defers the incorporation of the higher-dimensional non-abelian Lorentz algebras till after the classification of $(1|N)$-supermultiplets (=\,representations of $(1|N)$-supersymmetry)\cite{rGR0}. Thereby, the non-abelian nature of higher-dimensional Lorentz algebras gives rise to a collection of obstructions of dimensionally {\em oxidizing\/} some of the $(1|N)$-supermultiplets into their higher-dimensional analogues; the unobstructed ones are then the representations of supersymmetry algebras in 2- and higher-dimensional spacetimes.

\subsection{Adinkras and Representations of $(1|N)$-Supersymmetry}
 In particular, Refs.\cite{rA,r6-1,r6-2} introduce, hone and apply a graphical device, akin to wiring schematics, which can fully encode all requisite details about $(1|N)$-supersymmetry, its action within off-shell supermultiplets and the possible couplings of all off-shell supermultiplets---for all $N$. These graphs, called {\em Adinkras\/}, are closely related to the rigorous underpinning of off-shell representation theory in supersymmetry\cite{r6--1}, but are also intuitively easy to understand and manipulate. We restrict to $(1|N)$-supersymmetry without central charges, and to supermultiplets upon which the supersymmetry acts by a $\ZZ_2^{\times}$-monomial\footnote{Monomial matrices have a single non-zero entry in every row and column; by $\ZZ_2^{\times}$-monomial we mean that the nonzero entries are $\pm1$, forming a multiplicative $\ZZ_2$ group.} linear transformation.
 
 Adinkras represent component bosons in a supermultiplet as white nodes, fermions as black. A white and a black node are connected by an edge, drawn in the $I^{\rm th}$ color if the $I^{\rm th}$ supersymmetry transforms the corresponding component fields one into another. A sign/parity degree of freedom\cite{rA} in the supersymmetry transformation of a component field into another is represented by solid {\em vs\/}.\ dashed edges.
 In the natural units ($\hbar=1=c$), all physical fields have a definite engineering dimension defined up to an overall additive constant, and we accordingly stack the nodes at heights that reflect the engineering dimensions of the corresponding component fields:
 \begin{equation}
 \vC{\includegraphics[height=30mm]{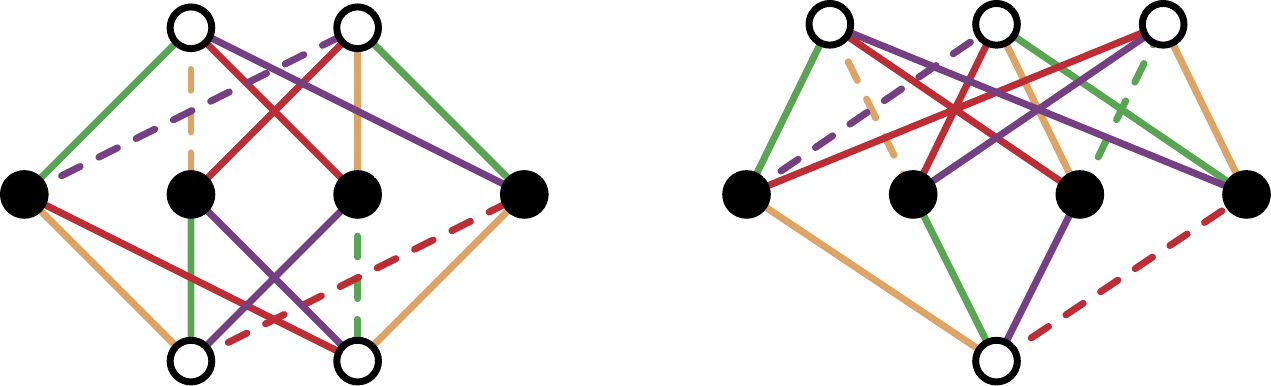}}
 \label{eB242A}
\end{equation}
These two Adinkras depict two $(1|4)$-supermultiplets, consisting of four bosons and four fermions; the action of each of the four supersymmetry generators is represented by the edges colored in one of the four colors. The two top white nodes in the left-most Adinkra correspond to bosons the engineering dimension of which is 1 mass-unit more than those of the bosons corresponding to the bottom two; the engineering dimension of the fermions, depicted by the black nodes in the mid-level, is half-way between these. By contrast, in the right-most Adinkra, there is only one boson with the lowest engineering dimension, and three have 1 mass-unit more than that. The supersymmetry transformation rules may be read off the Adinkras\eq{eB242A} straightforward manner, but we defer the details to the literature\cite{rA,r6-1}; suffice it here to draw the Reader's attention to the fact that every quadrangle must contain an odd number of dashed edges. This corresponds to the fact that the supersymmetry generators, $Q_I$, {\em anticommute\/}.
\ping

In general, then, $(1|N)$-supermultiplets are then obtained, in terms of Adinkras, using standard techniques in linear algebra, by: ({\bf1})~specifying $(1|N)$-sypersymmetry preserving maps between direct sums of Adinkras, including the ``0'' (empty) Adinkra, and ({\bf2})~specifying the kernels and cokernels of such maps, iterating such constructions as needed; we return to this in section~\ref{s:More}.

\subsection{Adinkra Topologies}
 \label{s:AT}
Given any Adinkra, let {\em topology\/} denote the overall connectivity of the nodes within an Adinkra, including the choice of dashed {\em vs\/}.\ full edges but up to sign-redefinition of every node\footnote{By changing the sign of a node, \ie, the component field corresponding to it, every edge connected to it will flip from dashed to full and {\em vice versa\/}.}. Let the collection of all Adinkras that share the same topology be called a {\em family\/}.
 Ref.\cite{r6-1} specifies the (superdifferential) operators that implement a recursive procedure whereby the nodes may be repositioned in all possible height arrangements, starting from any one of them, while maintaining the topology. This procedure thus reconstructs the entire family of Adinkras (and supermultiplets which they represent) from any one of them.

It then remains to specify all the possible Adinkra topologies, given $N$, the number of supersymmetries, and this is the subject of Ref.\cite{r6-3}. We prove therein that all Adinkras have the topology of the form of a $k$-fold iterated $\ZZ_2$-quotient of the $N$-cube, $I^N$, in which the number of supersymmetries is preserved and the supersymmetry action upon the Adinkra remains linear and 1--1 in each iteration. We defer to Ref.\cite{r6-3} for the rigorous proofs, but present here the gist of this result.

\subsubsection{The Nature of the Quotients}
To begin with, we associate to every monomial $\m_j:=Q_1^{a_{j1}}\,Q_2^{a_{j2}}\cdots Q_N^{a_{jN}}$, with $\a_{jl}\in\{0,1\}$, the point $(a_{j1},a_{j2},\cdots,a_{jN})\in[0,1]^N\subset\IR^N$; clearly, there are $2^N$ such monomials and points (nodes). The $I$-labeled edges then connect every point to the one that differs from it only in the $I^{\rm th}$ coordinate, \ie, they connect every monomial to one that differs only in whether it contains $Q_I$ or not. The same structure is obtained by letting now all the monomials $\m_j$ act upon a single field, $\f$---the `lowest component field', each $\m_j$ producing a $(1|N)$-superpartner, $(\m_j\f)$. The collection of so-obtained component fields, $\{\f,\6(\m_1\f),\6(\m_2\f),\cdots\}$, has inherited the $N$-cubical topology, admits a linear, 1--1 and off-shell action of the supersymmetry, and will be called herein the {\em free\/} supermultiplet\footnote{This supermultiplet is also referred to as `unconstrained', `Salam-Strathdee', and in contrast to subsequent constructions, `unprojected'.}.
 Owing to this, operations performed on the $N$-cube of $Q_I$-monomials have an obvious analogue on the $N$-cubical free supermultiplet, and we shan't bother with notational distinctions between these.

To preserve $N$, the number of supersymetries, the action with respect to which we intend to perform the quotient must not permute any of the $Q_I$'s. That is, all $N$ directions in $[0,1]^N\subset\IR^N$ must remain fixed. It then remains to perform simultaneous reflections about mid-points of some of the edges of the $N$-cube, the effect of which is the identification of the endpoints of those edges. This is implemented by operators of the form $(\a H^\b+\g\m_j)$, for some constants $\a,\b,\g$.

 For an operator $(\a H^\b+\g\m_j)$ to be idempotent and serve as a projection operator for a quotient, only those $\m_j$ may be used that square, upon using\eq{e(1|N)}, to an integral power of $H$. Such $\m_j$ all must be constructed from a doubly-even (divisible by 4) number of different $Q_I$'s and we denote:
\begin{equation}
 \hm_j:=Q_1^{a_{j1}}\,Q_2^{a_{j2}}\cdots Q_N^{a_{jN}}~,\quad\text{such that }
 \sum_{l=1}^N a_{jl} = 0 \bmod 4~. \label{e0mod4}
\end{equation}

 The projection operators\footnote{It is possible to `clear the denominator' and avoid negative powers of the Hamiltonian, $H$, by first transforming the free supermultiplet, using the vertex-raising of Ref.\cite{r6-1}, into the corresponding `base' supermultiplet\cite{rA}, where all bosons and all fermions, respectively, have the same engineering dimension. This effectively replaces $H$ in the present discussion with the identity operator.}
\begin{equation}
 \P_j:=\big(H^{\b_j}\pm\hm_j\big)/(2H^{\b_j})~,\qquad\text{where}\quad
 \b_j=\deg(\hm_j)
 \label{e}
\end{equation}
impose on the $N$-cube and the associated free supermultiplet relations of the form
\begin{equation}
 H^2\simeq Q_{I_1}Q_{I_2}Q_{I_3}Q_{I_4}~,\qquad\text{for some different }I_1,I_2,I_3,I_4~,
 \label{gen}
\end{equation}
where the monomial on the right-hand side of\eq{gen} is doubly-even, \ie, its degree is divisible by four.
 Such a relation implies a whole host of others, obtained by multiplying through with one or more of the $Q_I$'s; for example:
\begin{alignat}{3}
 \text{\Eq{gen}}\8 Q_{I_5}&:\quad&
 H^2\,Q_{I_5} &\simeq Q_{I_1}Q_{I_2}Q_{I_3}Q_{I_4}Q_{I_5}~;\\[2mm]
 \text{\Eq{gen}}\8 Q_{I_4}&:\quad&
 H^2\,Q_{I_4} &\simeq Q_{I_1}Q_{I_2}Q_{I_3}Q_{I_4}Q_{I_4}
               = Q_{I_1}Q_{I_2}Q_{I_3}\,H~,\nn\\[-1mm]
 \text{so}&&H\,Q_{I_4}
            &\simeq Q_{I_1}Q_{I_2}Q_{I_3}~;\\[2mm]
 \text{\Eq{gen}}\8 Q_{I_3}&:\quad&
 H^2\,Q_{I_3} &\simeq Q_{I_1}Q_{I_2}Q_{I_3}Q_{I_4}Q_{I_3}
               = -Q_{I_1}Q_{I_2}\,H\,Q_{I_4}~,\nn\\[-1mm]
 \text{so}&&H\,Q_{I_3}
            &\simeq -Q_{I_1}Q_{I_2}Q_{I_4}~;\\[2mm]
 \text{\Eq{gen}}\8 Q_{I_4}Q_{I_3}&:\quad&
 H^2\,Q_{I_4}Q_{I_3} &\simeq Q_{I_1}Q_{I_2}Q_{I_3}Q_{I_4}Q_{I_4}Q_{I_3}
               = Q_{I_1}Q_{I_2}\,H^2~,\nn\\[-1mm]
 \text{so}&&Q_{I_4}Q_{I_3}
            &\simeq Q_{I_1}Q_{I_2}~;
\end{alignat}
and so on. This operation may be pictured as the identification of all points mapped into each other through a simultaneous reflection through the mid-points of all $I_1$,- $I_2$,- $I_3$- and $I_4$-edges.

An equation such as\eq{gen} then clearly projects on the zero-set (kernel) of $\P_j:=(H^2-\hm_j)/(2H^2)$. It is not hard to show that $\P_j$ are all idempotent, and $\P_j+\bar{\P}_j=\Ione$, for every $j$, if $\bar{\P}_j:=(H^2+\hm_j)/(2H^2)$.

The reflection map $H^2\iff\hm_j$, with respect to which an equation of the form\eq{gen} describes the quotient will also be called $\hm_j$; each such map is an {\em involution\/} of the $N$-cube, \ie, a $\ZZ_2$-reflection. Consequently, division by the action of any $\hm_j$ halves the number of nodes.

For an iterated quotient, say $(\6(I^N/\hm_i)/\hm_j)$, it is necessary and sufficient that $[\hm_i,\hm_j]=0$. This implies that $\hm_i$ and $\hm_j$ must have an even number (possibly none) of $Q_1,\cdots,Q_N$ in common.

It then remains ``merely'' to find all such possible inequivalent combinations of such $\hm_j$-projections, counting two combinations equivalent if one can be turned into the other by a permutation of the $Q_1,\cdots,Q_N$ and of the $\hm_1,\cdots,\hm_k$.

\subsubsection{Listing the Quotients}
 \label{s:1List8}
The first few cases are easy to list (for complete proofs, see Ref.\cite{r6-3}):
\begin{itemize}
 \item For $N<4$, no projection can exist|we need at least four $Q_I$'s. Thus, for each of $N=1,2,3$, we have a single topology:\\
 \leavevmode\hbox to2pc{\hss\boldmath$*$~}%
  $I^1$ (interval), $I^2$ (square) and $I^3$ (cube).
 \item For $N=4$, there is only one projection $4n$-omial, $\hm_1=Q_1Q_2Q_3Q_4=:\m_{[1234]}$, which identifies the antipodal nodes in a 4-cube. The topology of the quotient, $I^4/\hm_1$, can be depicted easily, by identifying the antipodal nodes in the Adinkra with the $I^4$ (the hypercube) topology, as illustrated here:
\begin{equation}
 \vC{\includegraphics[height=40mm]{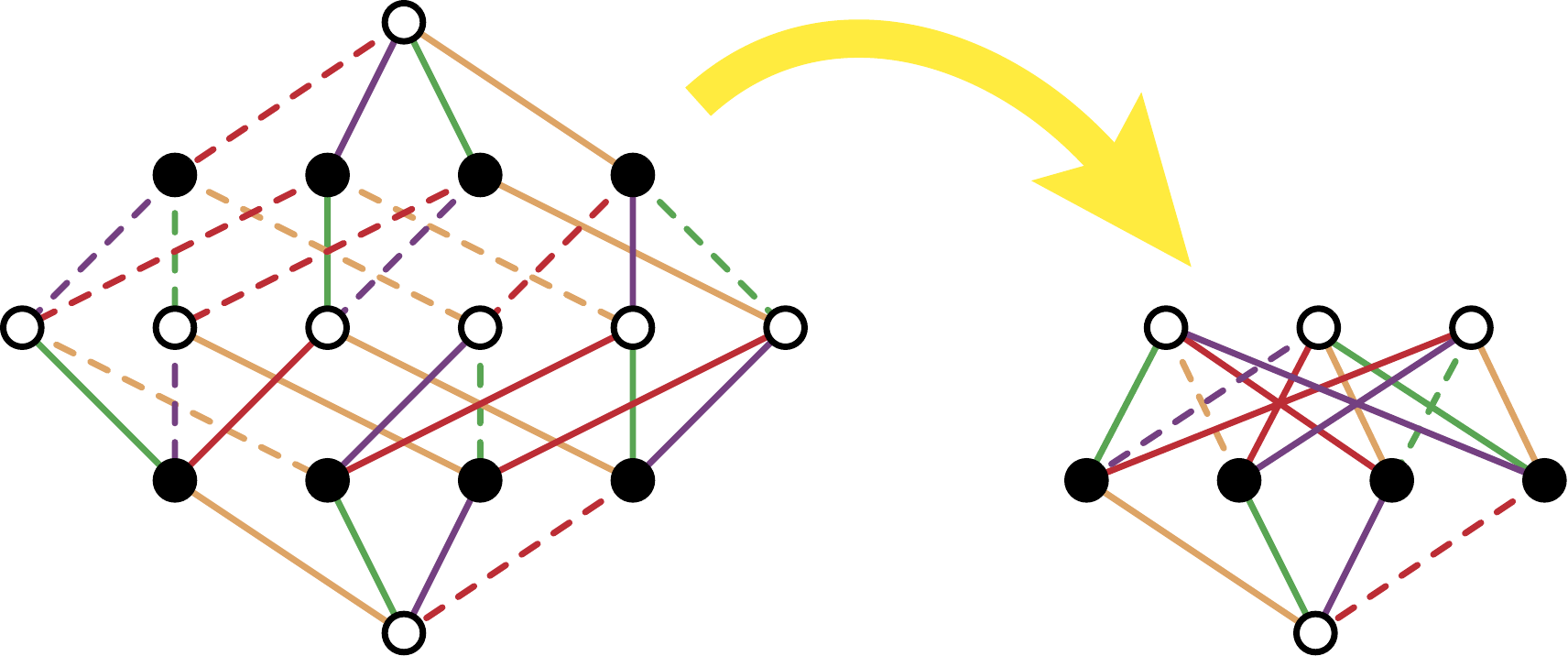}} \label{eB14641}
\end{equation}
Thus, for $N=4$, we have two distinct topologies:\\
 \leavevmode\hbox to2pc{\hss\boldmath$*$~}%
 $I^4$ and $D_4:=(I^4/\hm_1)$, depicted respectively to the left and to the right in\eq{eB14641}.
 \item For $N=5$, there are ${5\choose4}=5$ choices of which four $Q_I$'s to include in $\hm_j$, but these are all equivalent by $Q_I$-permutations. All of these leave one of the five lattice directions in the 5-cube intact, so that the result of this quotient must be a direct product of the $N=4$ quotient and the fifth, unprojected direction. Thus, there exist two $N=5$ topologies:\\
 \leavevmode\hbox to2pc{\hss\boldmath$*$~}%
 $I^5$ and $I^1\8 D_4$.
 \item For $N=6$, there are two mutually commuting $\hm_j$'s; up to a $Q_I$-permutation, these are $\hm_1=\m_{[1234]}$ and $\hm_2=\m_{[1256]}$. Notice that their composition, $\hm_1\circ\hm_2=\m_{[3456]}$, also satisfies the above requirements, which we may call $\hm_{1+2}$. It is not hard to see that $(I^6/\hm_1)\simeq(I^6/\hm_2)\simeq(I^6/\hm_{1+2})\simeq(I^2\8 D_4)$, where `$\simeq$' denotes equivalence by $Q_I$- and $\hm_j$-permutations. However, we can now also form a double quotient, $(I^6/\hm_i)/\hm_j$, where $\hm_i,\hm_j$ are any two of $\hm_1,\hm_2,\hm_{1+2}$; they all give equivalent results. Note that, on any such a double quotient, the third $\hm_j$ acts trivially, so that there is no triple quotient. We thus have the following three inequivalent topologies:\\
 \leavevmode\hbox to2pc{\hss\boldmath$*$~}%
$I^6$, $(I^2\8 D_4)$ and $D_6:=(\6(I^6/\hm_1)/\hm_2)$.
 \item For $N=7$, there are three mutually commuting reflection operators, up to a $Q_I$-permutation:
 $\hm_1=\m_{[1234]}$,
 $\hm_2=\m_{[1256]}$ and
 $\hm_3=\m_{[1357]}$. Their compositions,
\begin{equation}
 \hm_1\circ\hm_2=\m_{[3456]}~,\quad
 \hm_1\circ\hm_3=\m_{[2457]}~,\qquad
 \hm_2\circ\hm_3=\m_{[2367]}~,\qquad
 \hm_1\circ\hm_2\circ\hm_3=\m_{[1467]}~,
 \label{eV7}
\end{equation}
also satisfy the above requirements. Extending thus the previous procedure, we now have the following four inequivalent topologies:\\
 \leavevmode\hbox to2pc{\hss\boldmath$*$~}%
$I^7$, $(I^3\8 D_4)$, $(I^1\8 D_6)$ and $E_7:=\big((\6(I^6/\hm_1)/\hm_2)/\hm_3\big)$.
 \item For $N=8$, there are four mutually commuting $\hm_j$'s, up to a $Q_I$-permutation:
 $\hm_1,\hm_2,\hm_3$ from the previous cases, plus
 $\hm_4=\m_{[2468]}$. Their compositions\eq{eV7} plus
\begin{equation}
 \begin{gathered}
  \hm_1\circ\hm_4=\m_{[1368]}~,\quad
  \hm_2\circ\hm_4=\m_{[1458]}~,\qquad
  \hm_3\circ\hm_4=\m_{[12345678]}~,\\
  \hm_1\circ\hm_2\circ\hm_4=\m_{[2358]}~,\qquad
  \hm_1\circ\hm_3\circ\hm_4=\m_{[5678]}~,\qquad
  \hm_2\circ\hm_3\circ\hm_4=\m_{[3478]}~,\\
  \hm_1\circ\hm_2\circ\hm_3\circ\hm_4=\m_{[1278]}~,
 \end{gathered}
 \label{eE8}
\end{equation}
also satisfy the above requirements. Extending thus the previous procedure, we now have the following seven inequivalent topologies:\\
 \leavevmode\hbox to2pc{\hss\boldmath$*$~}%
$I^8$, $A_8:=(I^8/\m_{[12345678]})$, $(I^4\8 D_4)$, $(D_4\8 D_4)$, $(I^2\8 D_6)$, $D_8$, $(I^1\8 E_7)$ and, finally,\newline
\hbox to2pc{\hss}$E_8:=\big(\big((\6(I^6/\hm_1)/\hm_2)/\hm_3\big)/\hm_4\big)$.\\
Notice the `new' construction, $(D_4\8 D_4)$, which uses the facts that:
\begin{enumerate}
 \item $I^8=I^4_{1234}\8 I^4_{5678}$---the 8-cube is the direct product of the 4-cube generated by $Q_1,\cdots,Q_4$ and $I^4_{5678}$ generated by $Q_5,\cdots,Q_8$;
 \item $\hm_1$ operates exclusively within $I^4_{1234}\subset I^8$, and the combined reflection $(\hm_1\circ\hm_3\circ\hm_8)$ exclusively within $I^4_{5678}\subset I^8$.
\end{enumerate}
Thus, $D_4\8 D_4:=(I^4/\hm_1)\8(I^4/(\hm_1\circ\hm_3\circ\hm_8))$.
\end{itemize}
The notation ``$(D_4\8 D_4)$'' does not preclude monomials that contain $Q_I$'s from among both $Q_1,\cdots,Q_4$ and $Q_5,\cdots,Q_8$. Instead, it means that it is possible to {\em generate\/} the entire collection of these monomials from the two disjoint sets, $\{\m_{[1234]}\}$ and $\{\m_{[5678]}\}$. So, in fact, the full set is $(D_4\8D_4)=\{\Ione,\m_{[1234]},\m_{[5678]},\m_{[12345678]}\}$, and indeed may also be generated by the non-disjoint $\m_{[5678]}$ and $\m_{[12345678]}$, say. While this does not obstruct the identification of the set as $(D_4\8D_4)$, since it {\em can\/} be generated from the disjoint $\{\m_{[1234]}\}$ and $\{\m_{[5678]}\}$, it does point to an ambiguity that makes is difficult to identify mutually commuting monomial sets each given in terms of a generating set only.

Although we have examined only the few lowest-$N$ cases, it is obvious that there is an upper limit on how many mutually commuting $4n$-omials can be found, constructed from a fixed number,  $N$, of $Q_I$'s. We give here a recursive definition of this function:
\begin{equation}
 \vk(N):=
  \begin{cases}
   0 &\text{for $0\leq N<4$};\\
   \big\lfloor\frac{(N-4)^2}{4}\big\rfloor+1 &\text{for $N=4,5,6,7$};\\
   \vk(N{-}8)+4 &\text{for $N>7$, recursively}.
  \end{cases}
 \label{eKmax}
\end{equation}
and refer to the literature\cite{rPGMass,r6-3} for proofs.

Finally, the topologies obtained by using $\hm_j$-projection operators from two mutually commuting sets that are related by $Q_I$- and $\hm_j$-permutations are indistinguishable.
 We thus proceed to classify the different types of such iterated $\ZZ_2$-quotients up to $Q_I$- and $\hm_j$-permutations.

\section{Reformulations of the Problem}
 \label{s:RP}
The recursive procedure started in section~\ref{s:AT} is fairly clear and intuitive. It may be summarized as the following algorithm:
\begin{enumerate}
 \item\label{one} Find all inequivalent maximal sets, $\mathscr{S}_\a$, of mutually commuting doubly-even $Q_I$-monomials,
 \item\label{two} For each $\sS_\a$, construct the power-set, $P(\sS_\a)$, of all products of $Q_I$-monomials within $\sS_\a$, then identify those elements of $P(\sS_\a)$ which are equivalent up to $Q_I$- and $\hm_j$-permutations; call this set $P_*(\sS_\a)$.
 \item Construct all iterated quotients of the form $(I^N/\hm_{\a,j})$, for $\hm_{\a,j}\in P_*(\sS_\a)$.
\end{enumerate}
However, explicit constructions demonstrate that this becomes humanly intractable for larger $N$, and more efficient approaches are desired\cite{r6-3}.

The first objective is to efficiently address the task~\ref{one} in the above algorithm. Thereafter, the task~\ref{two} is straightforward, but obviously computation-intensive: it involves first the construction of the power-set, $P(\sS_{\a})$, for each set $\sS_{\a}$ and then the sifting for $Q_I$- and $\hm_j$-permutation inequivalent elements to obtain the set of permutation-inequivalent subsets, $P_{*}(\sS_{\a})$; a pre-selective construction of the latter would be clearly preferable.

\subsection{A Graphic Depiction}
 \label{s:GD}
Since the supersymmetry generators, $Q_{I}$, may be freely permuted, one may think of them as an unordered set, $V$, represented graphically by $N$ nodes. The desired $Q_I$-monomials\eq{e0mod4} then correspond to $w_i$-gons, $\4i{w_i}\subset V$, for which \Eq{e0mod4} implies
\begin{equation}
 \hm_i\mapsto\4i{w_i}~,\qquad |\4i{w_i}|=w_i=0\bmod4 \label{e0Mod4}
\end{equation}
and are representable as {\em $w_i$-gons\/} within $V$; to distinguish the $\4i{w_i}$'s, we may color them distinctly. 

The function
\begin{equation}
  d(\4i{w_i},\4j{w_j}):=w_i+w_j-2\big|\4i{w_i}\cap\4j{w_j}\big|~, \label{eHDij}
\end{equation}
counts in how many nodes $\4i{w_i}$ and $\4j{w_j}$ differ. Corresponding to the identity $\Ione=\m_{[-]}$ (the empty $Q_I$ monomial, with no $Q_I$'s), we denote the empty graph (no nodes and no edges) by `{\bf0}'. We then have that
\begin{equation}
 d(\4i{w_i},{\bf0}) = w_i~,\qquad\forall i\quad\&\quad0\leq w_i\leq|V|~. \label{eHDg}
\end{equation}
The condition that $\hm_i$ and $\hm_j$ have an even number of $Q_I$'s in common implies that
\begin{equation}
 |\4i{w_i}\cap\4j{w_j}|\id0\bmod2~. \label{eEven}
\end{equation}
 Eqs.\eq{e0Mod4} and\eq{eEven} then imply that:
\begin{equation}
 d(\4i{w_i},\4j{w_j})~\id~0\bmod4~. \label{eHDij2E}
\end{equation}
Thus, by iteratively listing $\4i{w_i}$'s subject to conditions\eq{e0Mod4} and~\eq{eEven}, the condition\eq{eHDij2E} is guaranteed for every pair in such a list. Furthermore, writing:
\begin{align}
 \4i{w_i}+\4j{w_j}&:= (\4i{w_i}\cup\4j{w_j}) \setminus (\4i{w_i}\cap\4j{w_j})~,
 \label{eSum}\\[-6mm]
\I{we have that:}
 \4i{w_i}+\4j{w_j}&=\4k{w_k}~,\quad\text{with}\quad
   w_k=d(\4i{w_i},\4j{w_j})\buildrel{\eq{eHDij2E}}\over\id0\bmod4~. \label{eSumW}
\end{align}
That is, the sum of any two of our $4n$-gons is also a $4n$-gon.
From this, it also follows that:
\begin{alignat}{5}
 \text{If }\4i{w_i}\cap\4j{w_j}&=\emptyset~, \quad&&\text{then}&\quad
 w_i,w_j&<|\4i{w_i}+\4j{w_j}|~; \label{eDisCon}\\[2mm]
 \text{If }\4i{w_i}&\subset\4j{w_j}~, \quad&&\text{then}&\quad
 \4i{w_i}\cap\4j{w_j}&=\4i{w_i}~,\quad\text{and} \nn\\[-1mm]
&&&&\quad
 \4i{w_i}+\4j{w_j}&=\4j{w_j}\setminus\4i{w_i}~. \label{eInclud}
\end{alignat}
The results\eqs{eDisCon}{eInclud} permit us to systematically select the smaller $w_i$-gons: For $\4i{w_i}+\4j{w_j}$ to be smaller than $\4j{w_j}$ and so its more efficient replacement, Eqs.\eq{eSumW} and\eq{eHDij} would have to imply that
\begin{equation}
 \big|\4i{w_i}+\4j{w_j}\big| < g_j~,\qquad\text{\ie}\qquad
 \big|\4i{w_i}\cap\4j{w_j}\big| > \inv2g_i~. \label{eIneq1}
\end{equation}
Now,
\begin{enumerate}
 \item If $w_i=4$, then\eq{eIneq1} implies that $|\4i{w_i}\cap\4j{w_j}|>2$, \ie, $|\4i{w_i}\cap\4j{w_j}|\geq4$ whereupon $\4i{w_i}\subset\4j{w_j}$, so that $\4j{w_j}$ may be replaced by the smaller $\4i{w_i}+\4j{w_j}$, by \Eq{eInclud}.
 \item If $w_i=8$, then\eq{eIneq1} implies that $|\4i{w_i}\cap\4j{w_j}|>4$, and owing to\eq{eSumW} then $|\4i{w_i}\cap\4j{w_j}|\geq8$; but then $\4i{w_i}\subset\4j{w_j}$, so that $\4j{w_j}$ may be replaced by the smaller $\4i{w_i}+\4j{w_j}$, by \Eq{eInclud}.
 \item If $w_i=12$, then\eq{eIneq1} implies that $|\4i{w_i}\cap\4j{w_j}|>6$, \ie, $|\4i{w_i}\cap\4j{w_j}|\geq8$; it then remains undetermined which are the smaller two from among $\4i{w_i},\4j{w_j}$ and $\4i{w_i}+\4j{w_j}$.
\end{enumerate}
This proves:
\begin{prop}
Let $\4i{w_i},\4j{w_j}$ both be $w_*$-gons satisfying the conditions\eq{e0Mod4} and\eq{eEven}.
Then, for $w_i\leq8$, the condition\eq{eIneq1} is both necessary and sufficient for replacing $\4j{w_j}$ with the smaller $\4i{w_i}+\4j{w_j}$; for $w_i\geq12$, \Eq{eIneq1} is only necessary.
\end{prop}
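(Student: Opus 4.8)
The plan is to read \eq{eIneq1} through the symmetric-difference size identity, and then to split the two clauses of the Proposition according to whether the evenness and doubly-even constraints pin the intersection down all the way to full inclusion. First I would record the single algebraic fact underlying both directions: by \eq{eSum}, \eq{eHDij} and \eq{eSumW} one has $|\4i{w_i}+\4j{w_j}| = w_i + w_j - 2\,|\4i{w_i}\cap\4j{w_j}|$, so that the inequality $|\4i{w_i}+\4j{w_j}| < w_j$ is equivalent, reversibly and by elementary algebra, to $|\4i{w_i}\cap\4j{w_j}| > \inv2 w_i$, i.e.\ to \eq{eIneq1}. This equivalence holds for every $w_i$, and it already delivers \emph{necessity} in all cases: if $\4i{w_i}+\4j{w_j}$ is to be the smaller replacement for $\4j{w_j}$, then \eq{eIneq1} must hold.

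For the converse I would show that, when $w_i\leq8$, \eq{eIneq1} in fact forces $\4i{w_i}\subseteq\4j{w_j}$, whereupon \eq{eInclud} identifies the replacement explicitly as $\4j{w_j}\setminus\4i{w_i}$, of size $w_j-w_i<w_j$, giving \emph{sufficiency}. The argument is a short divisibility computation on $n:=|\4i{w_i}\cap\4j{w_j}|$, which is even by \eq{eEven} and bounded by $w_i$. For $w_i=4$, condition \eq{eIneq1} reads $n>2$, so evenness forces $n=4=w_i$ and hence $\4i{w_i}\subseteq\4j{w_j}$. For $w_i=8$ it reads $n>4$; here I would combine the evenness of $n$ with the doubly-even constraint \eq{eSumW} on $|\4i{w_i}+\4j{w_j}|$ to push $n$ up to $n=8=w_i$, again yielding inclusion. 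Substituting $n=w_i$ into \eq{eInclud} then completes sufficiency for $w_i\in\{4,8\}$.

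Finally I would exhibit that this forcing collapses once $w_i\geq12$, so that \eq{eIneq1} becomes only necessary. For $w_i=12$, \eq{eIneq1} gives merely $n>6$, hence $n\geq8$ by evenness, and both $n=8$ and $n=10$ lie strictly below $w_i=12$; thus \eq{eIneq1} no longer entails $\4i{w_i}\subseteq\4j{w_j}$. In that regime the clean description \eq{eInclud} is unavailable, and it is genuinely undetermined which of $\4i{w_i}$, $\4j{w_j}$ and $\4i{w_i}+\4j{w_j}$ is smallest, so \eq{eIneq1} only certifies that $\4i{w_i}+\4j{w_j}$ undercuts $\4j{w_j}$, not that it is the correct minimal generator. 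The same obstruction recurs for every $w_i\geq12$, since the gap between $\inv2 w_i$ and $w_i$ then spans more than one even step.

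I expect the one real sticking point to be the boundary case $w_i=8$, where $n>4$ together with evenness alone leaves $n\in\{6,8\}$ open: ruling out $n=6$, and thereby securing inclusion, is exactly the place where the doubly-even arithmetic of \eq{eSumW} must be deployed at full strength, and it is also precisely the step that fails for $w_i\geq12$. Handling this borderline carefully is what separates the ``necessary and sufficient'' regime from the merely ``necessary'' one, and it is the crux of the whole statement.
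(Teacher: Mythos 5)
Your proposal retraces the paper's own argument almost exactly: the same algebraic reading of \eq{eIneq1} as $|\4i{w_i}+\4j{w_j}|<w_j$, the same trichotomy $w_i=4,8,12$, and the same attempt to force $\4i{w_i}\subseteq\4j{w_j}$ via \eq{eEven} and \eq{eSumW} so that \eq{eInclud} applies. The cases $w_i=4$ and $w_i\geq12$ go through as in the paper. The problem is the step you yourself flag as the crux and leave open: ruling out $n=6$ when $w_i=8$. That step cannot be closed by the route you (and the paper) indicate. Applying \eq{eSumW} to the pair gives $w_i+w_j-2n\equiv0\bmod4$, and since $w_i,w_j\equiv0\bmod4$ this yields only $2n\equiv0\bmod4$, i.e.\ $n$ even---nothing beyond what \eq{eEven} already supplies. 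And $n=6$ genuinely occurs: take $\4i{8}=\{1,\dots,8\}$ and $\4j{8}=\{1,\dots,6,9,10\}$; both are doubly even, their intersection has size $6$, their symmetric difference $\{7,8,9,10\}$ has size $4\equiv0\bmod4$, and \eq{eIneq1} holds since $6>4$, yet $\4i{8}\not\subseteq\4j{8}$. So the inclusion claim for $w_i=8$ fails, in your write-up and in the paper's own proof alike.

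Note that the bare conclusion---that $\4i{w_i}+\4j{w_j}$ is smaller than $\4j{w_j}$---survives in this example (its size is $4<8$), because it already follows from the reversible algebraic equivalence in your first paragraph, with no inclusion needed. But that equivalence holds for every $w_i$, so if ``sufficiency'' meant only this, the asserted dichotomy between $w_i\leq8$ and $w_i\geq12$ would be empty. The substantive content of the Proposition concerns which two of the three mutually generating gons $\4i{w_i}$, $\4j{w_j}$, $\4i{w_i}+\4j{w_j}$ to retain in a minimal-weight generating set, and for that your $w_i=8$ case needs either a corrected argument or a weakened statement: as written, the $n=6$ configuration is simply not covered, and it is exactly the borderline you correctly identified as the crux.
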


\subsection{Error-Detecting/Correcting Codes}
 \label{s:ED/CC}
We note that every monomial constructed from the $N$ supersymmetry generators $Q_I$ corresponds, in a 1--1 fashion, to a binary $N$-digit number ($N$-vector), the digits (components) of which are read off as follows:
\begin{equation}
 \hm_j=Q_1^{~a_{j1}}\,Q_2^{~a_{j2}}\,\cdots\,Q_N^{~a_{jN}}~\mapsto~
 \ba_j:=(a_{j1},a_{j2},\cdots,a_{jN})~,\quad a_{jl}\in\{0,1\}~. \label{eQ2B}
\end{equation}
and where
\begin{equation}
 \sum_{l=1}^N a_{jl}=w_j\id0\bmod4~,\qquad j=1,\cdots,k~, \tag{\ref{e0mod4}$'$}
\end{equation}
echoes Eqs.\eq{e0mod4} and~\eq{e0Mod4}.
Thus, for example:
\begin{equation}
 N{=}6:
 \left\{
  \begin{array}{ccl}
   \hm_1&\mapsto&001111~,\\[-3pt]
   \hm_2&\mapsto&111100~,\\[-3pt]
   \hm_1\circ\hm_2&\mapsto&110011~;\\
  \end{array}\right.
 \quad
 N{=}7:
 \left\{
  \begin{array}{rcl}
   \hm_1&\mapsto&0001111~,\\[-3pt]
   \hm_2&\mapsto&0111100~,\\[-3pt]
   \hm_1\circ\hm_2&\mapsto&0110011~,\\[-3pt]
   \hm_3&\mapsto&1010101~,\\[-3pt]
   \hm_1\circ\hm_3&\mapsto&1011010~,\\[-3pt]
   \hm_2\circ\hm_3&\mapsto&1101001~,\\[-3pt]
   \hm_1\circ\hm_2\circ\hm_3&\mapsto&1100110~;\\
  \end{array}\right. \label{eD6V7}
\end{equation}
and so on.

It turns out that each such collection of $k$ binary $N$-digit numbers ($N$-vectors) forms an (2-error-detecting and 1-error-correcting) $[N,k]$-code\cite{rMcWS,rCHVP}, and their double-evenness (divisibility of $w_i$ by 4) by virtue of \Eq{eSumW}, guarantees that they are all {\em self-orthogonal\/}. In fact, most of the preceding section, \SS~\ref{s:GD}, is straightforward to re-cast into the corresponding statements about such $[N,k]$-codes. In fact, the function\eq{eHDij} is well known as the {\em Hamming distance\/} between two code-words, and\eq{eHDg} as the {\em Hamming weight\/} of a code-word.

 Just as the monomials $\hm_{1}=Q_1Q_2Q_3Q_4,\hm_{2}=Q_3Q_4Q_5Q_6$ for $N=6$, and $\hm_{1}=Q_1Q_2Q_3Q_4,\hm_{2}=Q_3Q_4Q_5Q_6,\hm_{3}=Q_1Q_3Q_5Q_7$ for $N=7$, \etc, generate all the others in the collections\eq{eD6V7}, every other set, $\sS_\a$ of mutually commuting, doubly-even monomials in $Q_1,\cdots,Q_N$ has a {\em generating set\/}. These generating sets are by no means unique: the full collection of $\hm_j$-monomials for $N=6$ is equally well generated by either two from among $\hm_1,\hm_2,(\hm_1\circ\hm_2)$, and all choices are equivalent to each other via $Q_I$- and $\hm_j$-permutations. For higher $N$,
\begin{enumerate}
 \item the number of such distinct generating sets is combinatorially larger,
 \item they are not all equivalent to each other, 
 \item both the automorphism groups and the number of inequivalent classes grows combinatorially.
\end{enumerate}
 
The situation is identical with the $[N,k]$-codes specified by the mapping\eq{eQ2B}. 

There is a considerable body of literature, even in book form\cite{rMcWS,rCHVP}, on error-detecting/correcting codes, but much of the research in this field focuses on codes containing code-words with weights no smaller than a certain minimum. By contrast, the very definition\eq{e0Mod4} implies that we are interested in codes consisting only of doubly-even code-words, \ie, the Hamming weight of every code-word must be divisible by 4, not merely `larger than 3'.

Furthermore, $(1|N)$-supersymmetry and its representations occur in physics applications in at least three different ways: ({\bf1})~dimensional reduction of the ({\em real\/}) spacetime in field theories, ({\bf2})~dimensional reduction of the ``underlying''  theory\footnote{This is well understood for (super)string theory, where the ``underlying'' theory is a quantum field theory with the world-sheet of the string as the domain space; it is not known precisely what sort of theories constitute the ``underlying'' theories of $M$- and $F$-theory.} of string-theory and its $M$- and $F$-theory extensions, and ({\bf3})~effective supersymmetry in the Hilbert space of all supersymmetric quantum field theories. In case~({\bf1}), $N$ will be limited, except for the low, $N=1,2,3$ cases, to being divisible by 4. However, in the remaining two cases, this restriction is far less strict and depends on the details of the dynamics of the concrete model considered. However, various considerations place a relatively well-accepted {\em upper\/} bound: $N\leq32$ from considerations in $M$-theory\cite{rJPS}.

Given a fixed $N$, it is desirable to find {\em all\/} indecomposable representations of $(1|N)$-supersym\-metry, as all these may, conceivably, serve as building blocks in various such theories. This implies that we {\em do\/} need all possible Adinkra topologies, \ie, all $[N,k]$-codes, where $k\leq\vk(N)$ as defined in \Eq{eKmax}.

Binary $[N,\vk\6(N)]$-codes are {\em maximal\/} in the information-theoretic sense guaranteed by Shannon's theorem: the fraction $k/N$ is called the {\em information rate\/} of an $[N,k]$-code and measures how much of the information is being transmitted. The maximal $\vk(N)$ therefore corresponds to the maximal information rate: $\vk\6(N)/N\leq50\,\%$, and $\vk\6(N)/N=50\,\%$ only when $N=0\bmod8$.

In turn, for any fixed $N$, the $[N,\vk\6(N)]$-codes also correspond to {\em maximal\/} iterated $\ZZ_2$-quotients of $I^N$; that is, the iterated quotient is done the maximal, $\vk(N)$, number of times. Dually, the number of nodes in the resulting Adinkra is {\em minimal\/} for the fixed $N$: the size of the corresponding indecomposable $(1|N)$-supermultiplet, $(2^{N-k-1},2^{N-k-1})$, is minimal for $k=\vk(N)$.

Note, however, that these codes and Adinkra topologies for $N>9$ are {\em not unique\/}! For example, rather famously, there exist two distinct maximal binary doubly-even $[16,8]$-codes, corresponding to the Lie algebras $E_8\8E_8$ and $D_{16}$ 

Now, every $[N,k]$-code has a generating set consisting of $k$ $N$-digit binary numbers, which corresponds to a binary matrix of $k$ rows and $N$ columns. This matrix is of maximal rank\cite{rCHVP}. Using Gauss-Jordan row operations and column-permutations only (and so replacing a generating set by an equivalent one), this matrix can then be brought into the format $[\,\Ione_k\,|\,\IA\,]$, where $\IA$ is an $(N{-}k)\times k$ binary matrix, called the {\em redundancy part\/}, and $\Ione_k$ is the $k\times k$ identity matrix.

It is then tempting to ({\em falsely\/}) conclude that all binary $[N,k]$-codes are classified by classifying all smaller $(N{-}k)\times k$ matrices, $\IA$.

On one hand, the list of all inequivalent binary $(N{-}k)\times k$ matrices will be redundant as a list of all binary $[N,k]$-codes, since the complete generator set matrix, $[\,\Ione_k\,|\,\IA\,]$ admits more column-permutations than $\IA$ alone: inequivalent $\IA$-matrices may correspond to equivalent $[\,\Ione_k\,|\,\IA\,]$-matrices. With this problem alone, we could still classify all inequivalent $\IA$-matrices, and thereafter sort out the `external' equivalences due to the larger automorphism group of $[\,\Ione_k\,|\,\IA\,]$.

On the other hand, however, a list of inequivalent binary $\IA$-matrices also provides an {\em incomplete\/} list of binary $[N,k]$-codes:
\begin{prop}
The redundancy parts of generator sets of binary $[N,k]$-codes are not all sub-matrices of the redundancy parts of the generator sets of {\em maximal\/} binary $[N,\vk\6(N)]$-codes.
\end{prop}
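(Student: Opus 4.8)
The plan is to establish this negative statement by producing an explicit counterexample, after first pinning down what the conclusion really asks. Deleting rows from the standard generator $[\,\Ione_{\vk(N)}\mid\IA_{\max}\,]$ of a maximal code and re-standardizing the result (zero-padding the vacated identity columns) shows that the redundancy part $\IA_C$ of a $[N,k]$-code $C$ is, up to zero-columns, column-permutation and $\hm_j$-permutation, a sub-matrix of some $\IA_{\max}$ precisely when $C$ is equivalent to a subcode of that maximal code. Hence the proposition is equivalent to the assertion that some doubly-even self-orthogonal $[N,k]$-code with $k<\vk(N)$ is a subcode of \emph{no} maximum-dimensional ($k=\vk(N)$) doubly-even code. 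So I would aim to exhibit a doubly-even self-orthogonal code that is maximal under inclusion yet of sub-maximal dimension.

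That such a code can exist is the crux: although every self-orthogonal binary code extends to a self-dual one (for even length), the extra double-evenness demanded by\eq{e0Mod4} can obstruct \emph{every} enlargement. The code $C$ fails to extend precisely when each doubly-even vector of $C^\perp$ already lies in $C$: any candidate generator $v\in C^\perp\setminus C$ then either has weight not divisible by $4$, or its combinations with the existing generators break double-evenness through\eq{eSumW}, so that $\langle C,v\rangle$ leaves the required class. I would therefore search, by increasing $N$, for a doubly-even self-orthogonal $C$ that is maximal in the inclusion poset of such codes but has $\dim C<\vk(N)$. A natural arena is the range $N>9$ flagged above, where the maximum codes cease to be unique; for $N\id0\bmod8$ these are the doubly-even self-dual codes, and for $N=16$ there are exactly two, associated with $E_8\8E_8$ and $D_{16}$, so a single code embedding into neither would already settle the claim.

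The verification splits in two. First, that the chosen $C$ is doubly-even and self-orthogonal is a finite check on a generating set using\eq{eEven} and\eq{eSumW}. Second, and harder, one must show $C$ lies in no maximum code: for a well-chosen $N$ this amounts to verifying that no $Q_I$-permutation carries the generators of $C$ into any code on the (finite, fully classified) list of maximum doubly-even codes---equivalently, that every doubly-even vector of $C^\perp$ is already in $C$, which can be read off a parity-check matrix of $C$. The main obstacle is exactly this second step: excluding embedding into \emph{all} equivalence classes of maximum code requires the complete classification of maximum doubly-even codes at the chosen $N$ together with control of the permutation freedom, a case-analysis that grows rapidly with $N$; keeping $N$ as small as possible (so the maximum codes are few and explicit, as at $N=16$) is what makes it tractable. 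A counting alternative---showing that the number of inequivalent $[N,k]$-codes exceeds the number of $k$-dimensional subcodes extractable from the maximum codes, forcing an omission by pigeonhole---avoids the explicit embedding test, but it presupposes enumeration data that is itself the object of the classification, so I would prefer the direct counterexample.
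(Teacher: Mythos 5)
Your opening reduction is where the argument goes wrong. You claim that the redundancy part $\IA_C$ of a $[N,k]$-code $C$ is a sub-matrix of some $\IA_{\max}$ ``precisely when $C$ is equivalent to a subcode of that maximal code,'' and you then recast the proposition as the existence of a doubly-even self-orthogonal code of dimension $k<\vk(N)$ that embeds in \emph{no} maximal code. That equivalence is false, and the target statement it produces is also false: the paper's very next proposition asserts (via Gaborit's mass-formula results) that \emph{every} binary doubly-even $[N,k]$-code with $k<\vk(N)$ \emph{is} a subcode of some $[N,\vk(N)]$-code. So the inclusion-maximal-but-dimension-submaximal code you propose to hunt for does not exist, and the search can never terminate successfully. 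The slip is a quantifier one: the proposition is about the redundancy parts of \emph{generator sets} --- i.e., about a particular matrix presentation --- not about the subcode relation between the underlying codes. A code can be a subcode of every maximal code while still admitting a generator set whose redundancy part fits inside no maximal code's redundancy part, because re-standardizing after row deletion replaces the generator set by a different one.

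The paper's actual proof is a one-line weight count on exactly this presentation-level statement. The redundancy part of a generator of a $[N,\vk(N)]$-code has only $N-\vk(N)$ columns, so each of its rows has weight at most $N-\vk(N)$; but a $[N,k]$-code with $k<\vk(N)$ can have a generator whose redundancy part has weight up to $N-k>N-\vk(N)$, and such a row cannot occur in any sub-matrix of a maximal code's redundancy part. The paper's own extreme example makes the distinction vivid: the $[32,1]$-code generated by the all-ones word has redundancy weight $31$, far exceeding the bound $15$ for $[32,16]$-codes, yet the all-ones word is contained in every maximal $[32,16]$-code --- it is only this \emph{choice of generator set} that fails to embed. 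If you want to salvage your write-up, drop the equivalence claim and the counterexample hunt entirely and replace them with this column-count argument.
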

\begin{proof}
Since the redundancy part has $(N{-}k)$ columns, every binary $[N,k]$-code can be generated by code-words of Hamming weight no more than $(N{-}k){+}1$: at most $(N{-}k)$ 1' in the redundancy part, and precisely one 1 in $\Ione_k$.

But, for $k<\vk(N)$, this means that a binary $[N,k]$-code may well have generators of weight $(N{-}k)>(N{-}\vk\6(N))$, and so not be equivalent to any subcode of any binary $[N,\vk(N)]$-code.
\end{proof}
\Remk
An extreme example is provided as follows: binary $[32,16]$-codes are maximal in that they are generated by the maximal number of generators, $16=\vk(32)$. The redundancy part of such codes' generator sets is a $16\time16$ binary matrix. The maximal weight of the generators is then no more than $16+1$, and having to be divisible by four, it must be 16. The maximal weight of the redundancy part of any generator must therefore be 15.

By contrast, the generator $1111\,1111\,1111\,1111\,1111\,1111\,1111\,1111$ of the binary $[32,1]$-code has weight 32. The weight of its redundancy part is 31, and so could not be a sub-matrix of the redundancy part of any binary $[32,16]$-code.

\Remk
Of course, the {\em full\/} binary $[32,16]$-codes may well (in fact, they always do) contain the above weight-32 codeword, so that another generator set of {\em any\/} binary $[N,k]$-code would have contained this code-word.

In fact, we have:
\begin{prop}
For every $k<\vk(N)$, every binary $[N,k]$-code is a subcode of some binary $[N,\vk\6(N)]$-code.
\end{prop}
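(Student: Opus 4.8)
The plan is to recast the problem in the language of quadratic forms over $\ZZ_2$, where the desired extension property becomes an instance of Witt's theorem. Let $V=\ZZ_2^N$ carry the standard inner product $u\cdot v=\sum_l u_l v_l$, and let $E\subset V$ be the subspace of even-weight vectors, i.e.\ $E=\mathbf{1}^\perp$ with $\mathbf{1}=(1,\dots,1)$. On $E$ I would define $q(v):=\tfrac12\wgt(v)\bmod2$; using $\wgt(u+v)=\wgt(u)+\wgt(v)-2|u\cap v|$ as in\eq{eHDij}, one checks that $q$ is a quadratic form whose polar form is exactly the inner product, $q(u+v)+q(u)+q(v)=u\cdot v$. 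Under this dictionary a doubly-even self-orthogonal $[N,k]$-code is precisely a $k$-dimensional \emph{totally singular} subspace of $(E,q)$: the vanishing $q|_C=0$ encodes double-evenness, and it forces the polar form to vanish on $C$ as well, i.e.\ self-orthogonality (the content of\eq{eSumW}). By definition, $\vk(N)$ is the largest dimension attained by such a subspace.

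In this language the statement reads: every totally singular $C$ with $\dim C<\vk(N)$ is contained in a totally singular subspace of dimension $\vk(N)$. First I would reduce to a single extension step—it suffices to produce a vector $v\notin C$ with $q(v)=0$ and $v\cdot c=0$ for all $c\in C$, for then $\langle C,v\rangle$ is again totally singular of dimension $\dim C+1$, and iterating reaches $\vk(N)$. To produce $v$, I would pass to the quotient $C^{\perp}/C$, where $C^\perp=\{x:x\cdot c=0\ \forall c\in C\}$ is the polar-form perp. Since $C$ is totally singular, $C\subseteq C^\perp$ and $q$ descends to a well-defined quadratic form $\bar q$ on $C^\perp/C$ (well-definedness uses $q(c)=0$ and $c\cdot x=0$ for $c\in C,\ x\in C^\perp$). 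A nonzero singular vector of $(C^\perp/C,\bar q)$ lifts to the desired $v$, and such a vector exists as soon as the Witt index of $\bar q$ is positive; the task thus reduces to showing that $\bar q$ is isotropic whenever $\dim C<\vk(N)$.

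The crux, and the main obstacle, is therefore the standard but nontrivial fact that \emph{every inclusion-maximal totally singular subspace has the same (hence maximal) dimension} $\vk(N)$—equivalently, that one can never get ``stuck'' at a non-maximal anisotropic quotient. This is Witt's extension theorem for quadratic spaces over $\ZZ_2$, which I would either cite or establish in the needed special case by the descent above, using that an anisotropic quadratic form over $\ZZ_2$ has dimension at most $2$. Care is required with the degeneracy of the polar form on $E$: its radical is $E\cap\langle\mathbf{1}\rangle$, which equals $\langle\mathbf{1}\rangle$ for $N$ even and $\{0\}$ for $N$ odd, while $q(\mathbf{1})=\tfrac N2\bmod2$ separates $N\equiv0$ from $N\equiv2\pmod4$. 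I would handle this by first adjoining to $C$ the singular part of the radical (so that $\mathbf{1}\in C$ whenever $N\equiv0\bmod4$), rendering the induced form on the relevant quotient nondegenerate before invoking Witt; this degeneracy bookkeeping is exactly what encodes the $\bmod\,8$ behavior of $\vk(N)$ and the fact that $\vk(N)/N=\tfrac12$ precisely when $8\mid N$.
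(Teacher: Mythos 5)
Your proof is correct in substance but takes a genuinely different route from the paper's, which settles the proposition in two lines by appealing to Theorem~7 and Lemma~7 of \cite{rPGMass} and deferring details to \cite{r6-3}, so that no internal extension mechanism is actually exhibited there. You supply one: the classical dictionary between doubly-even self-orthogonal codes and totally singular subspaces of the quadratic space $(E,q)$ with $q(v)=\tfrac12\wgt(v)\bmod2$, the one-step extension via a nonzero singular vector of the form induced on $C^{\perp}/C$ (with $C^{\perp}$ taken inside $E$, as you implicitly do --- a lift taken in all of $\ZZ_2^N$ could have odd weight and lie outside the domain of $q$), and Witt's extension theorem for quadratic forms in characteristic~2 to guarantee one never gets stuck below the top dimension. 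The degeneracy bookkeeping you flag at the end is where the real content sits and should be carried out rather than gestured at, but it does close up: for $N$ odd the polar form on $E$ is nondegenerate and Witt applies directly; for $N\equiv0\bmod4$ the radical vector $\mathbf{1}$ is singular, may be adjoined to $C$, and one passes to the nondegenerate quotient $E/\langle\mathbf{1}\rangle$; for $N\equiv2\bmod4$ the radical vector is nonsingular, totally singular subspaces are exactly the graphs of $q$ over Lagrangians of a nondegenerate alternating complement, and Witt for symplectic forms applies. Note finally that you treat ``$\vk(N)$ equals the maximal dimension of a totally singular subspace'' as a definition, whereas the paper defines $\vk(N)$ by the recursion in \Eq{eKmax} and cites \cite{rPGMass,r6-3} for the equivalence; your argument in effect re-derives that equivalence (the mod-$8$ pattern being the plus/minus type of the quotient form), which buys self-containedness and an explanation of the structure of \Eq{eKmax} at the cost of importing Witt's theorem, where the paper buys brevity at the cost of an external citation.
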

\begin{proof}
It follows from the proofs of Theorem~7 and Lemma~7 of Ref.\cite{rPGMass}, that every binary, doubly-even $[N,k]$-code, with $k<\vk(N)$, may be completed to a binary, doubly-even $[N,\vk\6(N)]$-code by adding $(\vk(N){-}k)$ linearly independent generators and the generated code-words; see Ref.\cite{r6-3}.
\end{proof}
However, this may be difficult to discern from any preferred choice of generator sets.

\subsection{$1\leq N\leq8$}
 \label{s1N8}
The results for $1\leq N\leq8$ are presented in Fig~\ref{f1N8}.
Then, for example, the code-word corresponding to the tetragon $\414=\{1,2,3,4\}$ corresponds to the binary number $\cdots00001111$, $\424=\{1,3,4,6\}$ corresponds to $\cdots00101101$, $\434=\{1,2,5,6\}$ corresponds to $\cdots00110011$, and so on. 
\begin{figure}[ht]
\begin{center}
 \begin{picture}(140,40)(10,0)
  \put(0,20){\includegraphics[width=160mm]{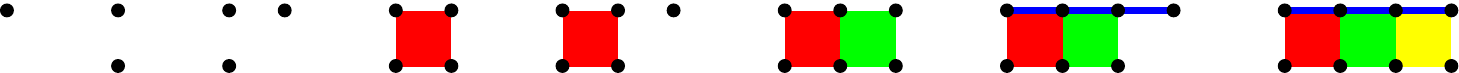}}
  \put(0,32){$I^1$}
  \put(11.5,32){$I^2$}
  \put(27,32){$I^3$}
  \put(45,32){$D_4$}
  \put(62.5,32){$I^1\8D_4$}
  \put(90,32){$D_6$}
  \put(117,32){$E_7$}
  \put(148,32){$E_8$}
  \put(42.5,12){\small$\color{Red}{1111}$}
  \put(61,12){\small$\color{Red}{0~1111}$}
  \put(86,12){\small$\color{Red}{00~1111}$}
  \put(86,8){\small$\color{Green}{11~1100}$}
  \put(112,12){\small$\color{Red}{000~1111}$}
  \put(112,8){\small$\color{Green}{011~1100}$}
  \put(112,4){\small$\color{Blue}{101~0101}$}
  \put(142,12){\small$\color{Red}{0000~1111}$}
  \put(142,8){\small$\color{Green}{0011~1100}$}
  \put(142,4){\small$\color{yel}{1111~0000}$}
  \put(142,0){\small$\color{Blue}{0101~0101}$}
 \end{picture}
\caption{The tetragons representing the binary $[N,\vk(N)]$-codes' generator sets for $1\leq N\leq8$.}
\label{f1N8}
\end{center}
\end{figure}
The codes are then read off the graphs by identifying the nodes as binary digits, following the pattern:
\begin{equation}
 \begin{matrix}
  1 & 3 & 5 & 7 & \cdots \\
  2 & 4 & 6 & 8 & \cdots
 \end{matrix} \label{ePat}
\end{equation}
Initially, the code-words themselves are given underneath the graphs, inked in a color that corresponds to the tetragons in the graphs. Note that the blue tetragon is depicted as a straight line, present in the $E_7$ and $E_8$ graphs, whereas the red, green and yellow ones are squares; this appearance is of course immaterial. There are also many other depictions of the same and equivalent colored subsets; the essential is, however, the adherence to \Eq{eEven}.

Also, for each $N$, the generator sets of the {\em maximal\/} codes are presented. This then corresponds to a {\em maximal\/} iterated $\ZZ_2$-quotient of $I^N$; `intermediate' topologies are obtained by dropping one or more of the generators, \ie, binary vectors, \ie, $\ZZ_2$ quotients from the iteration.

Combining the well known fact\cite{rCHVP} that binary $[N,k]$-codes can be obtained from binary $[N',k]$-codes by {\em puncturing\/} when $N'=N{+}1$ and the above results, we see that all binary doubly-even $[N,k]$-codes with $N\leq32$ and $k\leq\vk(N)$ may be obtained one from another, and so form a tree: one starts from the empty code, which is the case for
\begin{equation}
 N=1:~ [1,0]=[\Stk{0}]~,\qquad
 N=2:~ [2,0]=[\Stk{00}]~,\qquad
 N=3:~ [3,0]=[\Stk{000}]~,
\end{equation}
and obtains the first non-empty code at
\begin{equation}
 N=4:~ [4,0]=\left[\Stk{0000}\right] \subset
       [4,1]=\left[\Stk{0000\\ 1111}\right]~,
\end{equation}
with the empty code its subset. To conserve space, we revert to only listing a generator set rather than the full code. Adopting the notation $[\IG]^N_k$, where $\IG$ is a generator set of a code, we have the initial stages of our tree of codes, based on the results from \SS~\ref{s:1List8} and also given in Figure~\ref{f1N8}:
\begin{equation}
 \vC{\begin{picture}(140,73)(8,-23)
  \put(0,47){$[-]^1_0$}
  \put(10,48){\line(1,0){6}}
  \put(17,47){$[-]^2_0$}
  \put(27,48){\line(1,0){6}}
  \put(34,47){$[-]^3_0$}
  \put(44,48){\line(1,0){6}}
  \put(51,47){$[-]^4_0$}
  \put(61,48){\line(1,0){9}}
  \put(71,47){$[-]^5_0$}
  \put(80,48){\line(1,0){11}}
  \put(92,47){$[-]^6_0$}
  \put(101,48){\line(1,0){12}}
  \put(114,47){$[-]^7_0$}
  \put(123,48){\line(1,0){14}}
  \put(137,47){$[-]^8_0$}
  \put(123,46){\line(3,-1){10}}
  \put(133,41){$[\Stk{11111111}]^8_1$}
  \put(44,45){\line(1,-2){5}}
  \put(49,31){$[\Stk{1111}]^4_1$}
  \put(61,32){\line(1,0){6}}
  \put(68,31){$[\Stk{01111}]^5_1$}
  \put(82,32){\line(1,0){6}}
  \put(89,31){$[\Stk{001111}]^6_1$}
  \put(105,32){\line(3,1){5}}
  \put(110,33){$[\Stk{0001111}]^7_1$}
  \put(128,35){\line(1,0){5}}
  \put(133,34){$[\Stk{00001111}]^8_1$}
  \put(128,34){\line(2,-1){5}}
  \put(132.5,27){$\left[\Stk{00001111\\ 11110000}\right]^8_2$}
  \put(82,30){\line(1,-1){6.5}}
  \put(89,19){$\left[\Stk{001111\\ 111100}\right]^6_2$}
  \put(105,20){\line(1,0){4}}
  \put(109,19){$\left[\Stk{0001111\\ 0111100}\right]^7_2$}
  \put(128,20){\line(1,0){4}}
  \put(132.5,19){$\left[\Stk{00001111\\ 00111100}\right]^8_2$}
  \put(128,17){\line(1,-1){5}}
  \put(132.5,9){$\left[\Stk{00001111\\ 00111100\\ 11110000}\right]^8_3$}
  \put(105,16){\line(1,-3){5}}
  \put(109.5,-3){$\left[\Stk{0001111\\ 0111100\\ 1010101}\right]^7_3$}
  \put(128,-2){\line(1,0){4}}
  \put(132,-3){$\left[\Stk{00001111\\ 0011\,1100\\ 01010101}\right]^8_3$}
  \put(128,-7){\line(1,-1){5}}
  \put(132,-17){$\left[\Stk{00001111\\ 00111100\\ 11110000\\ 01010101}\right]^8_4$}
 \end{picture}}
 \label{eCode3}
\end{equation}
corresponding to the (evolutionary) tree of Adinkra topologies, named in \SS~\ref{s:1List8} and also in Figure~\ref{f1N8}:
\begin{equation}
 \vC{\begin{picture}(140,55)(8,-5)
  \put(3,47){$I^1$}
  \put(8,48){\line(1,0){9}}
  \put(19,47){$I^2$}
  \put(24,48){\line(1,0){9}}
  \put(35,47){$I^3$}
  \put(40,48){\line(1,0){9}}
  \put(51,47){$I^4$}
  \put(56,48){\line(1,0){13}}
  \put(71,47){$I^5$}
  \put(76,48){\line(1,0){13}}
  \put(90,47){$I^6$}
  \put(96,48){\line(1,0){13}}
  \put(112,47){$I^7$}
  \put(118,48){\line(1,0){13}}
  \put(134,47){$I^8$}
  \put(118,47){\line(4,-1){14}}
  \put(133,41){$A_8$}
  \put(40,46){\line(2,-3){9}}
  \put(50,31){$D_4$}
  \put(56,32){\line(1,0){8}}
  \put(65,31){$I^1\8D_4$}
  \put(80,32){\line(1,0){5}}
  \put(86,31){$I^2\8D_4$}
  \put(100.5,32.5){\line(3,1){5.5}}
  \put(107,33){$I^3\8D_4$}
  \put(122,35){\line(1,0){7}}
  \put(129.5,34){$I^4\8D_4$}
  \put(122,34){\line(2,-1){7}}
  \put(129.5,27){$D_4\8D_4$}
  \put(80,30){\line(4,-3){9}}
  \put(90,20){$D_6$}
  \put(96.5,21){\line(1,0){9.5}}
  \put(107,20){$I^1\8D_6$}
  \put(122,21){\line(1,0){7}}
  \put(129.5,20){$I^2\8D_6$}
  \put(122,20){\line(3,-1){11}}
  \put(133,14){$D_8$}
  \put(96,19){\line(4,-3){13}}
  \put(110,6){$E_7$}
  \put(116,7){\line(1,0){13}}
  \put(129.5,6){$I^1\8E_7$}
  \put(116,6){\line(5,-2){15}}
  \put(132.5,-2){$E_8$}
 \end{picture}}
 \label{eCode3n}
\end{equation}
and each of which, by virtue of the `hanging gardens theorem' of Ref.\cite{r6-1}, corresponds to a {\em family\/} of indecomposable representations of $(1|N)$-supersymmetry, growing combinatorially with $N$.

In ``reading'' the trees\eqs{eCode3}{eCode3n} from right to left, the links indicate ``puncturing''\cite{rCHVP}: dropping a supersymmetry generator, so $N\mapsto(N{-}1)$, whereas the left-to-right direction corresponds to ``extending'' the codes. On the other hand, each column in these, horizontal depictions of the trees\eqs{eCode3}{eCode3n}---corresponding to horizontal {\em levels\/} in the more traditional, vertical depiction, is clearly seen to be partitioned by variable $k$. A code $[\IG]^N_k$ may then be obtained from $[\IG']^N_{k+1}$ by the {\em lateral\/} transformation of dropping one of the generators, and {\em vice versa\/}, from $[\IG'']^N_{k-1}$ by adding a generator. These four operations provide the basic transformations of codes (and so Adinkra topologies) from one into another. Additional, more complex operations are also possible\cite{rCHVP}.

To conclude, we list the number of bosons and fermions in the supermultiplets with the corresponding topologies\eq{eCode3n} in Table~\ref{t1N8bf}.
\begin{table}
  \centering
 {\footnotesize
  \begin{tabular}{@{} c|c|c|c|c|c|c|c @{}}
    \toprule
    \multicolumn{8}{c}{\BM{N}}\\
    {\bf1} & {\bf2} & {\bf3} & {\bf4} & {\bf5} & {\bf6} & {\bf7} & {\bf8} \\ 
    \midrule
    \shortstack{$I^1$\\(1;1)} & \shortstack{$I^2$\\(2;2)} & \shortstack{$I^3$\\(4;4)}
  & \shortstack{$I^4$\\(8;8)} & \shortstack{$I^5$\\(16;16)} & \shortstack{$I^6$\\(32;32)}
  & \shortstack{$I^7$\\(64;64)} & \shortstack{$I^8$\\(128;128)} \\ 
    \midrule
    & & & \shortstack{$D_4$\\(4;4)} & \shortstack{$I^1\8D_4$\\(8;8)}
  & \shortstack{$I^2\8D_4$\\(16;16)} & \shortstack{$I^3\8D_4$\\(32;32)}
  & \shortstack{$I^4\8D_4$\\(64;64)} \shortstack{$A_8$\\(64;64)}\\ 
    \midrule
    & & & & & \shortstack{$D_6$\\(8;8)} & \shortstack{$I^1\8D_6$\\(16;16)}
  & \shortstack{$D_4\8D_4$\\(32;32)} \shortstack{$I^2\8D_6$\\(32;32)}\\ 
    \midrule
    & & & & & & \shortstack{$E_7$\\(8;8)} & \shortstack{$D_8$\\(16;16)}
                                             \shortstack{$I^1\8E_7$\\(16;16)}\\ 
    \midrule
    & & & & & & & \shortstack{$E_8$\\(8;8)}\\ 
    \bottomrule
  \end{tabular}}
  \begin{picture}(30,40)(-2,61)
   \put(17,80){\footnotesize $I^8$}
    \put(17,78){\vector(-1,-1){5}}
    \put(19,78){\vector(1,-1){5}}
   \put(3,68.5){\footnotesize $I^4\8D_4$}
   \put(23,68.5){\footnotesize $A_8$}
    \put(10,66.5){\vector(0,-1){5}}
    \put(23,66.5){\vector(-3,-2){8}}
    \put(25,66.5){\vector(0,-1){5}}
   \put(3,57){\footnotesize $D_4\8D_4$}
   \put(20,57){\footnotesize $I^2\8D_6$}
    \put(10,55){\vector(0,-1){6}}
    \put(23,55){\vector(-3,-2){10}}
    \put(25,55){\vector(0,-1){6}}
  \put(20,45){\footnotesize $I^1\8E_7$}
   \put(8,45){\footnotesize $D_8$}
    \put(12,43){\vector(1,-1){5}}
    \put(25,43){\vector(-1,-1){5}}
   \put(17,33.5){\footnotesize $E_8$}
  \end{picture}
  \caption{The `fundamental' indecomposable $(1|N)$-supermultiplet families for $1\leq N\leq8$, indicated by their Adinkra topology label and the number of bosons and fermions. To the right is the pattern of quotient iterations for $N=8$, the lowest case where this is not linear.}
  \label{t1N8bf}
\end{table}

\subsection{Even More Than That!}
 \label{s:More}
Recall that additional supermultiplets may be defined using the `fundamental' ones and linear algebra, in precise analogy with representations of Lie algebras; the difference is only that the number of `fundamental' indecomposable representations with which to construct other ones is increasingly larger as $N$ grows\footnote{By sharp contrast, Lie algebras have a finite number of `fundamental' representations {\em fixed per type\/}: for all $n$, $A_n$ has the complex $(n{+}1)$-vector and its conjugate, $B_n$ the (pseudo-)real $2^n$-spinor, $C_n$ the pseudo-real $2n$-vector, $D_n$ the two $2^{n-1}$-spinors, and $G_2,F_4,E_6,E_7$ and $E_8$ their 7,- 26,- 27,- 56- and 248-dimensional representations. All other representations are obtained from these as (iterated) kernels and cokernels of maps between tensor products of these `fundamental' representations.}. 

For example, let \BM{X} be a $(1|4)$-supermultiplet with the $D_4$ topology, and \BM{Y} the direct sum of two $(1|4)$-supermultiplets with the $I^4$ topology. Using the hanging garden theorem of Ref.\cite{r6-1}, the bosons and the fermions may be redefined so that, in order of increasing engineering dimension, their numbers are: $(2;4;2)$ for \BM{X}, and $(2;8;12;8;2)$ for \BM{Y}. It is now possible to define the $(1|4)$-supermultiplet $\BM{Y}/\BM{X}$ with $(0;4;10;8;2)$ independent component fields, having gauged away the lowest $(2;4;2)$ component fields of \BM{Y} by \BM{X}. This is the cokernel of a supersymmetry-preserving map $\BM{X}\to\BM{Y}$.

Alternatively, we may shift the engineering dimensions of \BM{X}, so its top component is on par with the top component of \BM{Y}; denote this $\BM{X}'$. We may then constrain \BM{Y} to the kernel of a supersymmetry-preserving map $\BM{Y}\to\BM{X}'$, which has $(2;8;10;4;0)$ independent component fields. This in fact fits the usual definition of a complex-linear superfield from $(4|1)$-supersymmetry, where the map is provided by the quadratic superderivative $\bar{D}_{\dot\a}\bar{D}^{\dot\a}$: a complex-linear supermultiplet $\BM{\J}\subset\BM{Y}$ is defined to satisfy $\bar{D}_{\dot\a}\bar{D}^{\dot\a}\BM{\J}=0$, this being the kernel of $\bar{D}_{\dot\a}\bar{D}^{\dot\a}\BM{Y}\to\BM{X}'$.

Quite clearly, this provides for a recursive explosion of $(1|N)$-representations, foreshadowed in Ref.\cite{r6-1}, which presents a semi-infinite sequence of such, ever-larger representations of the lowest non-trivial $(1|3)$-supersymmetry.

\subsection{$9\leq N\leq16$}
 \label{s9N16}
Without further ado, we find:
\begin{equation}
 \vC{\begin{picture}(140,62)(15,-5)
  \put(-1,1){\includegraphics[width=160mm]{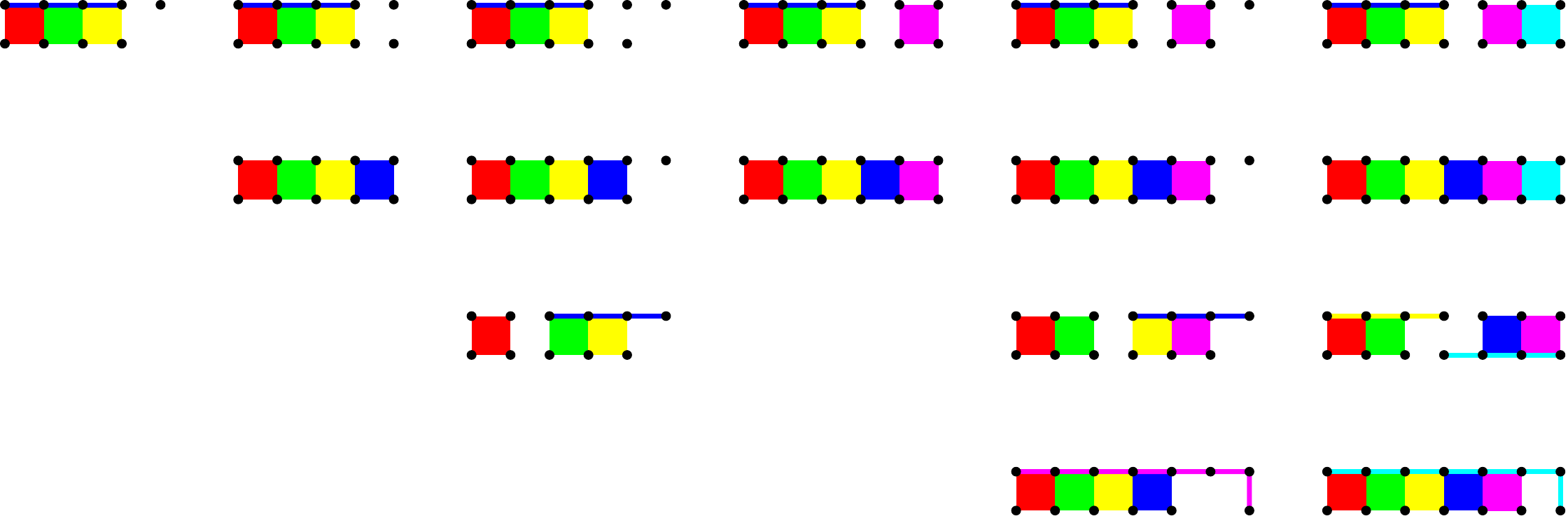}}
  \put(1,55){$N=9$}
   \put(1,43.5){\footnotesize$E_8\8I^1$}
  \put(25,55){$N=10$}
   \put(26,43.5){\footnotesize$E_8\8I^2$}
   \put(29,28){\footnotesize$D_{10}$}
  \put(51,55){$N=11$}
   \put(51,43.5){\footnotesize$E_8\8I^3$}
   \put(51,28){\footnotesize$D_{10}\8I^1$}
   \put(51,12.5){\footnotesize$D_4\8E_7$}
  \put(79,55){$N=12$}
   \put(80,43.5){\footnotesize$E_8\8D_4$}
   \put(83,28){\footnotesize$D_{12}$}
  \put(109,55){$N=13$}
   \put(108,43.5){\footnotesize$E_8\8D_4\8I^1$}
   \put(108,28){\footnotesize$D_{12}\8I^1$}
   \put(108,12.5){\footnotesize$D_6\8E_7$}
   \put(112,-3.5){\footnotesize$E_{13}$}
  \put(141,55){$N=14$}
   \put(141,43.5){\footnotesize$E_8\8D_6$}
   \put(144,28){\footnotesize$D_{14}$}
   \put(141,12.5){\footnotesize$E_7\8E_7$}
   \put(144,-3.5){\footnotesize$E_{14}$}
 \end{picture}}
 \label{e9N14}
\end{equation}
for $N\in[9,14]$, each column with a fixed, indicated $N$, and which correspond to the codes indicated in Table~\ref{t9N14}.
\begin{table}[ht]
  \centering
 {\footnotesize
  \begin{tabular}{@{} c|c|c|c|c|c @{}}
    \toprule
 \BM{N=9} & \BM{N=10} & \BM{N=11} & \BM{N=12} & \BM{N=13} & \BM{N=14} \\
    \midrule
   \stk{$I^1\8E_8$\\
        0~00001111\\
        0~00111100\\
        0~11110000\\
        0~01010101}
  &\stk{$I^2\8E_8$\\
        00~00001111\\
        00~00111100\\
        00~11110000\\
        00~01010101}
  &\stk{$I^3\8E_8$\\
        000~00001111\\
        000~00111100\\
        000~11110000\\
        000~01010101}
  &\stk{$D_4\8E_8$\\
        0000~00001111\\
        0000~00111100\\
        0000~11110000\\
        0000~01010101\\
        1111~00000000}
  &\stk{$I^1\8D_4\8E_8$\\
        00000~00001111\\
        00000~00111100\\
        00000~11110000\\
        00000~01010101\\
        01111~00000000}
  &\stk{$D_6\8E_8$\\
        000000~00001111\\
        000000~00111100\\
        000000~11110000\\
        000000~01010101\\
        001111~00000000\\
        111100~00000000} \\
    \midrule
  &\stk{$D_{10}$\\
        0000001111\\
        0000111100\\
        0011110000\\
        1111000000}
  &\stk{$I^1\8D_{10}$\\
        0~0000001111\\
        0~0000111100\\
        0~0011110000\\
        0~1111000000}
  &\stk{$D_{12}$\\
        000000001111\\
        000000111100\\
        000011110000\\
        001111000000\\
        111100000000}
  &\stk{$I^1\8D_{12}$\\
        0~000000001111\\
        0~000000111100\\
        0~000011110000\\
        0~001111000000\\
        0~111100000000}
  &\stk{$D_{14}$\\
        00000000001111\\
        00000000111100\\
        00000011110000\\
        00001111000000\\
        00111100000000\\
        11110000000000} \\
    \midrule
  &
  &\stk{$E_7\8D_4$\\
        0000000~1111\\
        0001111~0000\\
        0111100~0000\\
        1010101~0000}
  &
  &\stk{$E_7\8D_6$\\
        0000000~001111\\
        0000000~111100\\
        0001111~000000\\
        0111100~000000\\
        1010101~000000}
  &\stk{$E_7\8E_7$\\
        0000000~0001111\\
        0000000~0111100\\
        0000000~1010101\\
        0011110~0000000\\
        1111000~0000000\\
        1010101~0000000} \\
    \midrule
  &
  &
  &
  &\stk{$E_{13}$\\
        0000000001111\\
        0000000111100\\
        0000011110000\\
        0001111000000\\
        1110101010101}
  &\stk{$E_{14}$\\
        00000000001111\\
        00000000111100\\
        00000011110000\\
        00001111000000\\
        00111100000000\\
        11010101010101} \\
    \bottomrule
  \end{tabular}}
  \caption{The code-bases corresponding to the graphs\eq{e9N14}, headed with the names of the corresponding Cliffordinkras.}
  \label{t9N14}
\end{table}

Finally, for $N=15,16$, we find:
\begin{equation}
 \vC{\begin{picture}(140,48)(0,2)
  \put(2,10){\includegraphics[width=140mm]{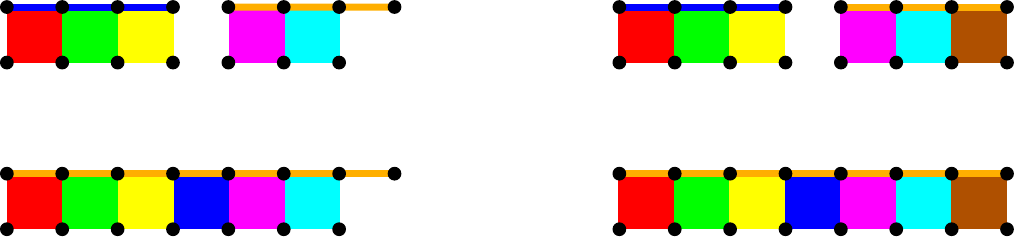}}
  \put(22,46){$N=15$}
   \put(22,28){\small$E_8\8E_7$}
   \put(25,4){\small$E_{15}$}
  \put(105,46){$N=16$}
   \put(106,28){\small$E_8\8E_8$}
   \put(110,4){\small$D_{16}$}
 \end{picture}}
 \label{e15N16}
\end{equation}
which correspond to the codes in Table~\ref{t15N16}.
\begin{table}[ht]
  \centering
  \begin{tabular}{@{} ccc|ccc @{}}
    \toprule
    \multicolumn{2}{c}{\BM{N=15}} &&& \multicolumn{2}{c}{\BM{N=16}} \\
    \midrule
   \stk{$E_7\8E_8$\\
        0000000~00001111\\
        0000000~00111100\\
        0000000~11110000\\
        0000000~01010101\\
        0001111~00000000\\
        0111100~00000000\\
        1010101~00000000}
  &\stk{$E_{15}$\\
        000000000001111\\
        000000000111100\\
        000000011110000\\
        000001111000000\\
        000111100000000\\
        011110000000000\\
        101010101010101}
 &&&\stk{$E_8\8E_8$\\
        00000000~00001111\\
        00000000~00111100\\
        00000000~11110000\\
        00000000~01010101\\
        00001111~00000000\\
        00111100~00000000\\
        11110000~00000000\\
        01010101~00000000}
  &\stk{$D_{16}$\\
        0000000000001111\\
        0000000000111100\\
        0000000011110000\\
        0000001111000000\\
        0000111100000000\\
        0011110000000000\\
        1111000000000000\\
        1010101010101010} \\
    \bottomrule
  \end{tabular}
  \caption{The code-bases corresponding to the graphs\eq{e15N16}, headed with the names of the corresponding Cliffordinkras.}
  \label{t15N16}
\end{table}
\Remk
A remark about our notation is in order. The Adinkra topology types, $D_{2n}$, occurring for example in the middle row of Table~\ref{t9N14}, derive their name from the corresponding standard in the coding literature\cite{rPGMass,rMcWS,rCHVP}. In all $n\neq0\bmod4$ cases, the $D_{2n}$-type codes have $k=(n{-}1)$ generators. However, at $n=0\bmod4$, the codes in this straightforward sequence of $D_{2n}$-type become non-maximal, since an additional, weight-$2n$ generator becomes available. For $n=4$, the addition of this extra generator results in the $E_8$ code, but for $n=8,12,16,\cdots$, it is this {\em augmented\/}, maximal code that is called $D_{2n}$, rather than the non-maximal one. It would have been more reasonable to keep $D_{2n}$ for the uniform sequence:
\begin{equation}
  \Stk{$D_4$\\1111}\quad
  \Stk{$D_6$\\001111\\ 111100}\quad
  \Stk{$D_8$\\00001111\\ 00111100\\ 11110000}~\left(\&~
  \Stk{$E_8$\\00001111\\ 00111100\\ 11110000\\ 01010101}\right)\quad
  \Stk{$D_{10}$\\0000001111\\ 0000111100\\ 0011110000\\ 1111000000}\quad
  \Stk{$D_{12}$\\000000001111\\ 000000111100\\ 000011110000\\
              001111000000\\ 111100000000}\quad
  \Stk{$D_{14}$\\00000000001111\\ 00000000111100\\ 00000011110000\\
              00001111000000\\ 00111100000000\\ 11110000000000}
 \label{eD2nSeq}
\end{equation}
and so also have:
\begin{equation}
 \Stk{$D_{16}$\\
      0000000000001111\\ 0000000000111100\\ 0000000011110000\\ 0000001111000000\\
      0000111100000000\\ 0011110000000000\\ 1111000000000000}
 \qquad\text{and}\qquad
 \Stk{$E_{16}$\\
      0000000000001111\\ 0000000000111100\\ 0000000011110000\\ 0000001111000000\\
      0000111100000000\\ 0011110000000000\\ 1111000000000000\\ 0101010101010101}~,
 \label{eD*16}
\end{equation}
and so on, following the $N=8$ naming convention. In fact however, the standard nomenclature has no $E_{16}$, and instead labels:
\begin{equation}
 \Stk{No Name\\
      0000000000001111\\ 0000000000111100\\ 0000000011110000\\ 0000001111000000\\
      0000111100000000\\ 0011110000000000\\ 1111000000000000}
 \qquad\text{and}\qquad
 \Stk{$D_{16}$\\
      0000000000001111\\ 0000000000111100\\ 0000000011110000\\ 0000001111000000\\
      0000111100000000\\ 0011110000000000\\ 1111000000000000\\ 0101010101010101}
 \label{eD16}
\end{equation}
We then call the (non-maximal) code in the left-hand side of\eq{eD16} $D^*_{16}$, as it may be obtained from $D_{16}$ by omitting the largest-weight generator.

Other than that, the $E_N$-type graphs presented here share a certain `sequential resemblance', best gleaned from the corresponding graphs in\eq{e9N14} and\eq{e15N16}.
Also, we have used the multiplicative notation, as in $I^2\8D_4$, since we are primarily interested in the Adinkra topology types; for codes, this should be interpreted as the direct sum\cite{rMcWS,rCHVP}.

The fact that graphs turn out to be quite useful in listing codes is not surprising: the issue of code-(in)equivalence is closely related to graph-(in)equivalence\cite{rEPRMR}. For not too large $N$, however, innate human perception and pattern recognition then aids us in more easily discerning inequivalent graphs than the corresponding codes.

The code and Adinkra topology trees\eqs{eCode3}{eCode3n} are then extended by the maximal codes in Tables~\ref{t9N14} and~\ref{t15N16}, and all their subcodes, following the procedure discussed for the $1\leq N\leq8$ cases. There are clearly many more codes and corresponding Adinkra topologies in the $9\leq N\leq16$ and $0\leq k\leq\vk(N)$ batch than in the previous, $1\leq N\leq8$ and $0\leq k\leq\vk(N)$ cases, and we omit listing them all and assembling into the extension of the trees\eqs{eCode3}{eCode3n}.

\subsection{$N>16$}
The grahical and combinatorial complexity of the task of finding even just the maximal codes now begins to surpass the common observational pattern-recognition for {\em exhaustively and effectively\/} constructing all inequivalent graphs and so all the codes and all the Adinkra topologies. We thus seek a computer code that would do so.

\section{Conclusions}
Previous work has shown that the problem of classifying the finite-dimensional, off-shell, linear representations of $(1|N)$-supersymmetry, known as $(1|N)$-supermultiplets, may be factored\cite{r6-1}. All such representations consist of a graded vector space, spanned by an equal number of bosonic and fermionic component fields, which supersymetry transforms into each other.

The coarse classification of these representations consists of specifying the {\em topologies\/} available to the graphical rendition, Adinkras, of $(1|N)$-supermultiplets. With a topology fixed, the hanging garden theorem of Ref.\cite{r6-1} constructs, recursively, all the Adinkras, and so supermultiplets, with that topology.

In this note, we describe a relationship between the topologies available to Adinkras, graphs designed to list them, and doubly-even error-correcting $[N,k]$-codes. Here, $N$ denotes, respectively, the number of supersymmetry generators, the number of graph nodes, and the length of the binary codewords; in turn, $k$ parametrizes the size of the $(1|N)$-supermultiplets as $(2^{N-k-1};2^{N-k-1})$-dimensional, the number of certain $4n$-gon subgraphs, and the number of generators of the code. We then present the complete results for $1\leq N\leq16$, with the maximal choice, $k=\vk(N)$ in sections~\ref{s1N8} and~\ref{s9N16}. The task of listing all the non-maximal cases $0\leq k<\vk(N)$ is done for $1\leq N\leq8$ in section~\ref{s1N8}, resulting in the tree-diagrams\eqs{eCode3}{eCode3n}. The particular results pertaining to error-correcting codes have been known before\cite{rMcWS,rCHVP}, but have never been related to the iterated $\ZZ_2$-quotients of the $N$-cube, much less to the topological structure of representations of any algebra, least of all supersymmetry.

This now established connection, however, points to the inherent and formidable complexity in the task of the coarse classification, leading to a previously unsuspected abundance of representations so defined. In fact, not only is, as established here, the number of topologies available to Adinkras combinatorially growing with $N$, so is the number of Adinkras {\em per\/} topology\cite{r6-1}. Furthermore, the so constructed $(1|N)$-supermultiplets are merely the starting point in the usual recursive construction of higher-than-fundamental representations of any algebra, the simplest non-trivial infinite sequence of which was presented in Ref.\cite{r6-1}.

\bigskip\paragraph{\bf Acknowledgments:}
 The research of S.J.G.\ is supported in part by the National Science
Foundation Grant PHY-0354401.
 T.H.\ is indebted to the generous support of the Department of
Energy through the grant DE-FG02-94ER-40854.
 The Adinkras were drawn with the aid of {\em Adinkramat\/}~\copyright\,2008 by G.~Landweber.

\end{document}